\newtheorem{theorem}{Theorem}
\begin{document}

\title{Quantum Supremacy through Fock State $q$-boson Sampling with Transmon Qubits}

\author{Chon-Fai Kam}
\email{dubussygauss@gmail.com}
\affiliation{Department of Physics, School of Physics, Mathematics and Computing, The University of Western Australia, Perth 6009, Australia}

\author{En-Jui Kuo}
\email{kuoenjui@nycu.edu.tw}
\affiliation{Department of Electrophysics, National Yang Ming Chiao Tung University, Hsinchu, Taiwan, R.O.C.}

\begin{abstract}
Transmon qubits have traditionally been regarded as limited to random circuit sampling, incapable of performing Fock state boson sampling—a problem known to be classically intractable. This work challenges that assumption by introducing $q$-boson Fock state sampling, a variant in which transmon qubits can operate effectively. Through direct mapping to the $q$-boson formalism, we demonstrate that transmons possess the capability to achieve quantum supremacy in $q$-boson sampling tasks. This finding expands the potential applications of transmon-based quantum processors and paves the way for new avenues in quantum computation.
\end{abstract}

\maketitle

\section{Introduction}
Quantum computation aims to design algorithms that surpass classical methods, with Shor’s algorithm for integer factorization offering a theoretical exponential speedup \cite{ekert1996quantum}. However, experimental demonstrations are limited to factoring small numbers like 15 and 21 \cite{lanyon2007experimental, martinis2012shor} due to complex quantum circuits and hardware constraints, requiring approximately $217n^3 \log^2(n)$ CNOT gates  \cite{gidney2021factor} for an $n$-bit integer—about one million for a 10-bit number. These experiments factor numbers classical computers handle instantly, leaving practical quantum advantage unachieved. In contrast, boson sampling, proposed by Aaronson and Arkhipov in 2012 \cite{aaronson2011computational}, uses photonic systems to sample bosonic state distributions, where output probabilities relate to the \#P-hard matrix permanent \cite{valiant1979completeness}. Efficient classical simulation of boson sampling would collapsing the polynomial hierarchy to the third level \cite{stockmeyer1983complexity} even approximately sampling in various statistical distance \cite{bouland2024average, bouland2023complexity}, making it a promising approach for demonstrating quantum supremacy.

Superconducting qubits, such as those used in Google's 53-qubit superconducting processor, Sycamore \cite{arute2019quantum}, which utilizes transmon qubits \cite{koch2007charge}, are well-suited for executing random circuit sampling (RCS), a benchmark task designed to highlight quantum advantage over classical computation \cite{bouland2019complexity}. Unlike boson sampling, which requires precise single-photon inputs and is fundamentally tied to \#P-hard problems like matrix permanents, RCS leverages the complexity of quantum circuits to demonstrate computational supremacy. Superconducting qubits are adept at generating and manipulating deep random quantum circuits but are not configured for Fock-state sampling, which is a requirement in boson sampling experiments \cite{brod2019photonic}. Their architecture, based on microwave-driven superconducting loops, makes them ideal for gate-based quantum operations rather than direct photon-number-state manipulations \cite{kjaergaard2020superconducting}.
Despite these constraints, superconducting qubits remain one of the leading candidates for scalable quantum computing \cite{kjaergaard2020superconducting}. Transmon qubits, in particular, improve coherence times by reducing sensitivity to charge noise, utilizing Josephson junctions within superconducting circuits to perform high-fidelity quantum gate operations \cite{kjaergaard2020superconducting, krantz2019quantum}. The Sycamore experiment, which achieved quantum supremacy in 2019, demonstrated that a 53-qubit transmon-based superconducting processor could complete a random circuit sampling task in just 200 seconds—an operation initially estimated to take Summit, a classical supercomputer, around 10,000 years \cite{arute2019quantum}. However, ongoing advances in classical computing techniques have significantly reduced the time required for simulating these quantum circuits \cite{pednault2019leveraging, huang2020classical}, fueling debates on the extent of quantum supremacy \cite{preskill2012quantum}.

A transmon is a type of superconducting qubit used in quantum computing, particularly in circuit quantum electrodynamics (QED). It is a nonlinear quantum oscillator formed by a Josephson junction shunted by a large capacitance, which reduces its sensitivity to charge noise compared to earlier superconducting qubits, such as the Cooper pair box \cite{koch2007charge}. The transmon is designed to operate as a qubit, but its energy spectrum includes multiple levels due to its nonlinear potential \cite{schreier2008suppressing}.

To account for non-dispersive effects, such as stray population leakage to higher energy states during resonator ring-up, a seven-level model is developed \cite{khezri2016measuring}. The truncation to a finite number of levels is necessary to ensure computational tractability and to account for the negligible population of higher energy states in most scenarios. However, for theoretical rigor or emerging applications—such as quantum simulation with transmons in exotic regimes—an infinite-level model may be necessary \cite{mansikkamaki2022beyond}. Here, exotic regimes refer to operating conditions where transmons deviate from standard qubit operation, such as functioning beyond a two-level system, weak coupling, or the dispersive regime, and exhibit behaviors requiring the full complexity of their spectrum \cite{didier2018analytical, ansari2019superconducting}.

In contrast to the prevailing view that transmon qubits are limited to random circuit sampling and cannot perform Fock state boson sampling—a task theoretically proven to be classically hard—this work challenges that perspective. While transmons cannot be used for standard Fock state boson sampling, they are capable of performing $q$-boson Fock state sampling. By directly mapping to the $q$-boson formalism, we demonstrate that transmons can achieve quantum supremacy in $q$-boson sampling tasks. The $q$-boson framework generalizes standard bosonic algebra by introducing a deformation parameter $q$, which modifies the commutation relations. By utilizing the full nonlinear spectrum of the transmon, rather than limiting it to a two-level or seven-level approximation, we can harness these $q$-deformed properties to enable more efficient sampling implementations.

More precisely, studies have established that quantum sampling tasks involving both standard bosons and $q$-deformed bosons are computationally hard for classical computers. This hardness arises because the altered commutation relations introduce unique quantum interference effects that are infeasible to replicate using conventional classical simulations. In contrast, transmon qubits—or more generally, transmon qudits—operating in an infinite-level configuration can naturally map to $q$-bosons, allowing quantum processors to perform sampling tasks that significantly outpace classical systems. As quantum technologies progress, integrating $q$-deformed structures into superconducting circuits may open new possibilities for demonstrating practical quantum advantage in areas such as boson sampling, quantum machine learning, and beyond. Fundamentally, the weak nonlinearity inherent to transmons allows them to be viewed as $q$-bosons, thereby making them viable candidates for Fock state $q$-boson sampling.

\section{Transmon Qubits as nonlinear Kerr oscillator}

Transmons are specially designed superconducting electrical circuits derived from the Cooper Pair Box (CPB), widely used as qubits in quantum computing due to their reduced sensitivity to charge noise. The transmon consists of a superconducting island coupled to a Josephson junction and a gate capacitance, forming a nonlinear LC circuit. Its quantum mechanical behavior is described by the Hamiltonian
\begin{equation}\label{Josephson}
H = \frac{(Q_J - C_g V_g)^2}{2 C_\Sigma} - E_J \cos\phi,
\end{equation}
where $C_\Sigma = C_g + C_J$ is the total capacitance of the island, comprising the gate capacitance $C_g$ and the junction capacitance $C_J$. The term $Q_J$ represents the charge on the island, $V_g$ is the gate voltage applied through the gate capacitance, and $\phi$ is the superconducting phase difference across the Josephson junction. The first term in Eq.\:\eqref{Josephson} corresponds to the capacitive or charging energy, which arises from the electrostatic energy stored in the capacitors, and the second term in Eq.\:\eqref{Josephson} represents the Josephson inductive energy, where $E_J$ is the Josephson energy, characterizing the strength of the tunneling of Cooper pairs across the junction.

To express the Hamiltonian in a more convenient form, one can introduce the charging energy $E_c$ and the number $n$ of the Cooper pairs. The Hamiltonian in Eq.\:\eqref{Josephson} can be rewritten as
\begin{equation}
H = 4 E_C (n - n_g)^2 - E_J \cos\phi,
\end{equation}
where $E_C = e^2 / (2 C_\Sigma)$ represents the charging energy, and $n = Q_J / (2e)$ represents the number of Cooper pairs on the island, where $2e$ is the charge of a Cooper pair. The effective offset charge $n_g = C_g V_g / (2e)$ accounts for the charge induced by the gate voltage, measured in units of Cooper pairs. The phase $\phi$ and the number operator $n$ are conjugate variables, satisfying the commutation relation $[\phi, n] = i$.

The transmon is a specific regime of the CPB, characterized by a large Josephson energy compared to the charging energy ($E_J \gg E_C$), typically achieved by shunting the Josephson junction with a large capacitance. This design reduces the sensitivity of the qubit's energy levels to charge noise, i.e., fluctuations in $n_g$, making transmons more robust than the original CPB. In the transmon regime, the energy spectrum becomes nearly harmonic, resembling that of a weakly anharmonic oscillator, which allows for well-defined qubit states while maintaining sufficient anharmonicity to distinguish the $|0\rangle$ and $|1\rangle$ states from higher energy levels.

To quantize the system, the Hamiltonian is treated in the quantum mechanical framework, where $n$ and $\phi$ are operators. The charging term $4 E_C (n - n_g)^2$ resembles the kinetic energy of a particle in a quadratic potential, while the Josephson term $-E_J \cos\phi$ acts as a periodic potential. In the transmon limit, the energy levels are approximately given by (see Appendix \ref{B} for a derivation)
\begin{equation}\label{Transmon} 
E_m \approx \sqrt{8 E_J E_C} \left( m + \frac{1}{2} \right) - \frac{E_C}{12} \left( m^2 + m + \frac{1}{2} \right),
\end{equation}
where $m\in \mathbb{N}$ labels the energy eigenstates. The anharmonicity, defined as the difference between the $|1\rangle$ to $|2\rangle$ and $|0\rangle$ to $|1\rangle$ transition frequencies, is approximately $-E_C$, which is small in the transmon regime, ensuring that the qubit can be selectively addressed.

In practice, transmons are fabricated using superconducting materials like aluminum or niobium on a substrate, with the Josephson junction typically formed by an Al-AlOx-Al tunnel junction. The large shunting capacitance reduces $E_C$, pushing the system into the transmon regime. This design has made transmons a cornerstone of modern superconducting quantum processors, enabling high-fidelity quantum gates and long coherence times in systems like those developed by IBM, Google, and Rigetti.

In some references, the transmon is described as \cite{khezri2016measuring}
\begin{equation}
        H_q = \sum_{n,k} E_k |n,k\rangle \langle n,k|, \quad E_k = E_0 + \omega_q k - \eta \frac{k(k-1)}{2}
    \end{equation}
    where \(\omega_q = \omega_{10}\) is the qubit frequency, \(\eta = \omega_{10} - \omega_{21}\) is the anharmonicity, and \(k = 0, 1, \ldots, 6\) indexes the energy levels. In contrast, this work adopts a model that includes an infinite ladder of energy levels.
    
\section{The Kerr Oscillator as a $q$-boson oscillator}

The infinite-level Kerr nonlinear oscillator Hamiltonian for a transmon is given by
\begin{equation}
H = \omega a^\dagger a + \frac{K}{2} a^\dagger a^\dagger a a,
\end{equation}
where $\omega$ is the bare frequency of the oscillator, $K$ is the Kerr nonlinearity strength related to the transmon’s anharmonicity, and $a^\dagger$ and $a$ are the standard bosonic creation and annihilation operators satisfying $[a, a^\dagger] = 1$. The energy eigenvalues of this Hamiltonian for a Fock state $|n\rangle$ are thus give by
\begin{equation}\label{boson}
E_n = \omega n + \frac{K}{2} n(n-1).
\end{equation}
The term $\frac{K}{2} n(n-1)$ introduces nonlinearity, causing the energy level spacing to depend on $n$. For example, the transition energy from $|n-1\rangle$ to $|n\rangle$ is: $\Delta E_n\equiv E_n - E_{n-1} = \omega + K(n-1)$. Hence, the Kerr nonlinearity strength can be measured via the difference in energy level spacing: $\Delta E_3-\Delta E_2 = K$. In transmon literature, the Kerr nonlinearity $ K $ is often negative, leading to a reduction in level spacing compared to the standard harmonic oscillator. In a transmon, the anharmonicity is typically small compared to the bare frequency, reflecting its weakly nonlinear oscillator nature. This nonlinear spacing is key to the $q$-boson mapping.

The $q$-boson algebra \cite{arik1976hilbert} is defined by the commutation relations between the $q$-deformed operators $a_q$, $a_q^\dagger$, and the number operator $N$ (see Appendix \ref{A} for an exploration of various types of $q$-boson algebras)
\begin{equation}
a_q a_q^\dagger - q a_q^\dagger a_q = 1, \quad [N, a_q] = -a_q, \quad [N, a_q^\dagger] = a_q^\dagger.
\end{equation}
Here, $q$ is the deformation parameter, with $q \neq 1$ in general. When $q = 1$, the algebra reduces to the standard bosonic case. The $q$-deformed number operator is defined as $[N]_q = \frac{q^N - 1}{q - 1}$, and the action of the $q$-boson operators on Fock states is
\begin{subequations}
\begin{align}
a_q |n\rangle &= \sqrt{[n]_q} |n-1\rangle, \\
a_q^\dagger |n\rangle &= \sqrt{[n+1]_q} |n+1\rangle,
\end{align}
\end{subequations}
where $[n]_q = \frac{q^n - 1}{q - 1}$. For $q \approx 1$, the $q$-deformed number approaches the standard number, $[n]_q \approx n$.

To relate the deformation parameter $q$ to the Kerr nonlinearity strength $K$, one can compare the energy spectrum of the Kerr oscillator to those of the $q$-boson system. Assume the Kerr nonlinearity is weak ($K \ll \omega$), which is characteristic of transmons, as their anharmonicity remains small relative to their bare frequency. We can map the Kerr Hamiltonian to an equivalent $q$-deformed Hamiltonian. A representative $q$-boson Hamiltonian might take the form
\begin{equation}
H_q = \omega [N]_q,
\end{equation}
where $[N]_q$ introduces nonlinearity. The eigenvalues of $H_q$ are 
\begin{equation}
E_n = \omega [n]_q.\label{qboson}
\end{equation}
For small deformations ($q \approx 1$), one may expand $q = 1 + \delta$ with $\delta \ll 1$, obtaining $ [n]_q \approx n + \frac{\delta}{2} n(n-1)$. Comparing Eq.\:\eqref{qboson} to the Kerr oscillator’s spectrum Eq.\:\eqref{boson} yields $\delta \approx +\frac{K}{\omega}$. Thus
\begin{equation}
q \approx 1 + \frac{K}{\omega}.
\end{equation}

In superconducting transmon qubits, the bare frequency $\omega$ (corresponding to the $|0\rangle \to |1\rangle$ transition) is usually in the range of 4--8 GHz (i.e., $\omega/2\pi \approx 4$--$8 \, \text{GHz}$). The anharmonicity $\alpha$ is typically on the order of 100--300 MHz (i.e., $\alpha/2\pi \approx 0.1$--$0.3 \, \text{GHz}$), though it can vary depending on the specific design parameters, such as the Josephson energy $E_J$ and charging energy $E_C$. Since $K \approx -\alpha$, the magnitude of the Kerr nonlinearity is the same as $\alpha$. Thus the ratio between the Kerr nonlinear strength and the bare frequency is 
\begin{equation}
|K|/\omega \approx \frac{2\pi \times (0.1\text{--}0.3) \, \text{GHz}}{2\pi \times (4\text{--}8) \, \text{GHz}},
\end{equation}
where the lower bound is $\frac{0.1}{8} \approx 0.0125$, and the upper bound is $\frac{0.3}{4} \approx 0.075$. Thus, for typical transmons $|K|/\omega \approx 0.01\text{--}0.08$. 

More physically, the anharmonicity $\alpha$ is determined by the ratio $E_J/E_C$, where $E_J$ is the Josephson energy and $E_C$ is the charging energy. For transmons, $E_J/E_C \gg 1$ (typically 20--100), leading to a small relative anharmonicity:
\begin{equation}
\alpha \approx -E_C, \quad |K| \approx E_C,
\end{equation}
and since $\omega \approx \sqrt{8 E_J E_C} - E_C$, the ratio $|K|/\omega$ is small
\begin{equation}
    \frac{K}{\omega} \approx \frac{1}{\sqrt{8E_J/E_c}-1}
\end{equation}

For example, a transmon with $\omega/2\pi = 6 \, \text{GHz}$ and $\alpha/2\pi = -200 \, \text{MHz}$ has:
\begin{equation}
|K| \approx 2\pi \times 0.2 \, \text{GHz}, \quad \omega \approx 2\pi \times 6 \, \text{GHz},
\end{equation}
\begin{equation}
|K|/\omega \approx \frac{0.2}{6} \approx 0.033.
\end{equation}
This value is consistent with experimental reports for transmons used in quantum computing platforms, such as those in IBM or Google’s superconducting qubit architectures.

The ratio range $|K|/\omega \approx 0.01$--$0.08$ aligns with typical transmon parameters reported in the literature, such as in \cite{khezri2016measuring} and \cite{ansari2019superconducting}, where anharmonicities are on the order of a few hundred MHz and bare frequencies are in the microwave regime. For instance: In \cite{khezri2016measuring}, transmon anharmonicities are discussed in the context of multi-level dynamics, with typical values supporting this range. Experimental papers on transmons \cite{burban2010arik} report $E_C/h \sim 100$--$300 \, \text{MHz}$ and $\omega/2\pi \sim 4$--$8 \, \text{GHz}$, yielding similar ratios.

In Fig.~\ref{fig:spectrum_comparison}, we present a comparison between the Kerr oscillator spectrum and the $q$-boson spectrum across varying values of $K/\omega$. The energy spectrum for the Kerr oscillator, Eq.\:\eqref{boson} can be revised as 
\begin{align}
E_n=\omega\left[n+\frac{1}{2}(q-1)n(n-1)\right],\:q=1+\frac{K}{\omega}.
\end{align}
We compare it with the $q$-deformed spectrum $E_n=\omega [n]_q$. Evidently, the two spectra coincide in the limit $q \to 1$, or equivalently when $K/\omega \rightarrow 0$. Within the approximation $|K/\omega|\lessapprox 0.1$, the two spectra are observed to be nearly identical. We also observe a reduction in level spacing for negative Kerr nonlinearity, i.e, $K<0$ or $q<1$, and conversely, an expansion in level spacing for positive Kerr nonlinearity, i.e., $K>0$ or $q>1$.

\begin{figure}[htbp]
    \centering
    \includegraphics[width=0.5\textwidth]{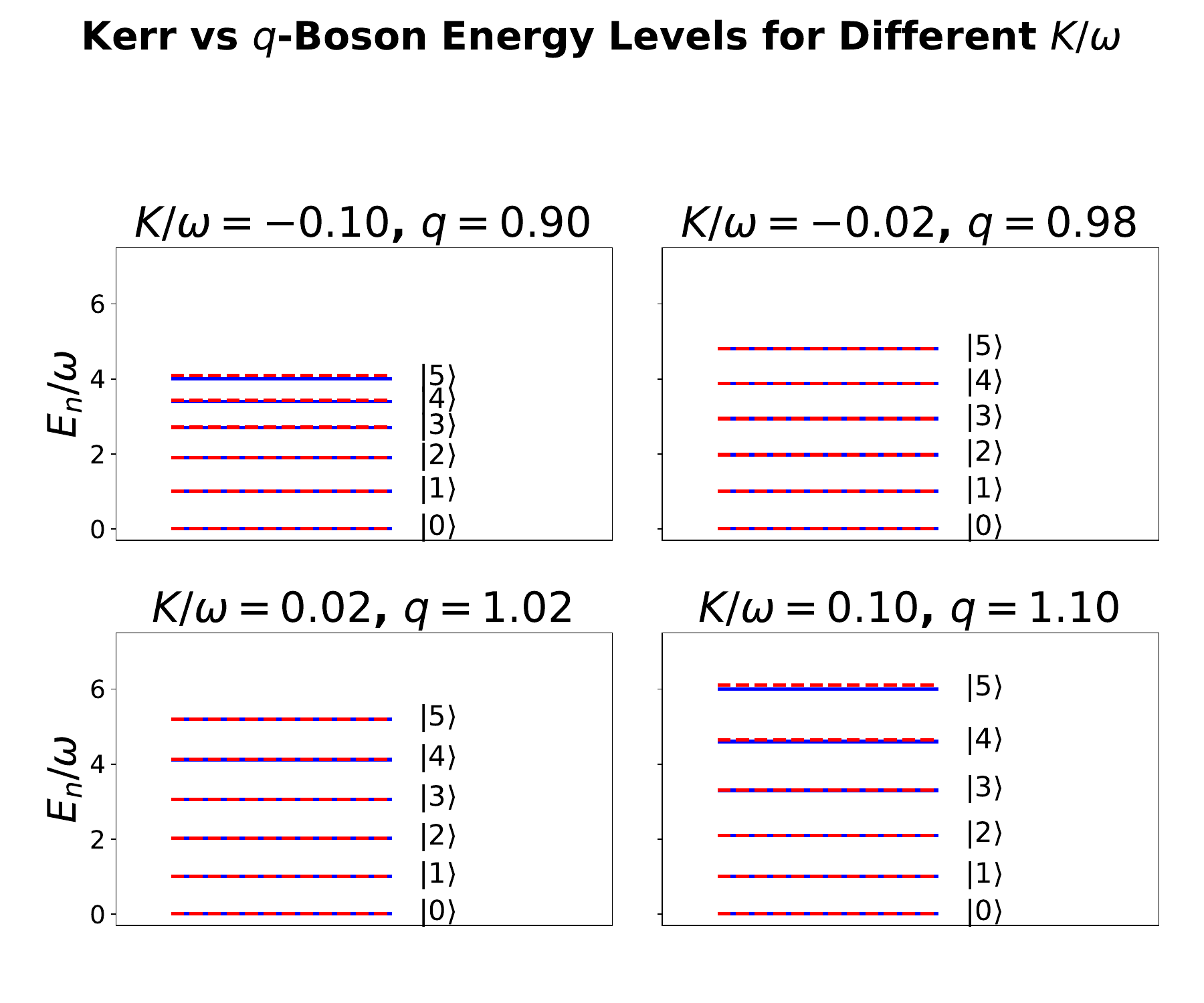}
    \caption{
    Comparison of the renormalized energy level spectra between the Kerr model (solid red lines) and the $q$-boson model (dashed blue lines), evaluated across varying ratios of $K/\omega$. The renormalized energy spectra of the Kerr model and the $q$-boson model coincide at $K/\omega = 0$, and gradually diverge as the nonlinearity parameter $K$ increases. The $q$-boson model captures non-perturbative deviations that extend beyond the Kerr approximation.
    }\label{fig:spectrum_comparison}
\end{figure}

\section{Fock-state $q$-boson sampling}
Generalized bosons are quantum particles in a many-body system on a lattice that obey standard bosonic commutation relations between different sites, i.e., $[a_i, a_j^\dagger] = \delta_{ij}$, $[a_i, a_j] = 0$, and $[a_i^\dagger, a_j^\dagger] = 0$ for $i \neq j$, but deviate from standard bosonic behavior at the same site ($i = j$) \cite{chung1993generalized,Burban2010,MizrahiLimaDodonov2004,Burban2007,BorzovDamaskinskyYegorov1995,Quesne2000,Engquist2009,Brzezinski1992,arik1976hilbert}. Unlike standard bosons, where $[a_i, a_i^\dagger] = 1$, generalized bosons have same-site commutation relations determined by a specific operator diagonal in the local Fock space number basis. For example, $q$-bosons satisfy the same-site commutation relation $\hat{a}\hat{a}^\dagger - q \hat{a}^\dagger \hat{a} = 1$, where $q$ is a deformation parameter \cite{arik1976hilbert}. 

For generalized bosons, the commutation relations $[a_i, a_i] = 0$ and $[a_i^\dagger, a_i^\dagger] = 0$ remain unaltered, while the commutation relation $[a_i, a_i^\dagger]$ is modified by a characteristic function $F(n_i)$, where $n_i$ is the eigenvalue of the number operator $\hat{n}_i = a_i^\dagger a_i$. Specifically, the commutation relation is given by $[a_i, a_j^\dagger] = \delta_{ij} \sum_{n_i=0}^{\infty} F(n_i) |n_i\rangle \langle n_i|$ \cite{kuo2022boson}. The generalized Fock state $|n_i\rangle$ for mode $i$ with occupation number $n_i$ is defined as $|n_i\rangle = \frac{1}{f(n_i)} (a_i^\dagger)^{n_i} |0\rangle$, where $f(n_i)$ is another characteristic function defining the generalized boson. The functions $F(n_i)$ and $f(n_i)$ are related, with their relationship depending on the specific physical system. For standard bosons, we have $F(n_i) = 1$ and $f(n_i) = \sqrt{n_i!}$. For a spin-$S$ system, where $n_i \leq 2S$, we have $F(n_i) = n_i - 2S$ for $n_i \leq 2S$ and zero otherwise, and $f(n_i) = \sqrt{\frac{n_i! (2S)!}{(2S - n_i)!}}$. Therefore, a multi-mode Fock state for the generalized bosons with $m$ modes can be written as \cite{kuo2022boson}
\begin{equation}
    |n_1, n_2, \dots, n_m\rangle = \left(\prod_{i=1}^{m} \frac{1}{f(n_i)}\right) a_1^{\dagger n_1}a_2^{\dagger n_2}\cdots a_m^{\dagger n_m} |0\rangle.
\end{equation}

The Aaronson-Arkhipov Fock-state boson sampling protocol involves sending a Fock state containing $n$ photons across $m$ mode into a randomly selected linear mode-mixing circuit, represented by a unitary matrix $U$ \cite{kuo2022boson}. This process enables mode-mixing, wherein the linear optical network \( U \) transforms the input mode operators \( \{a_i\}_{i=1}^{m} \) into corresponding output operators $b_i = \sum_{j=1}^{m} U_{ij} a_j$. The probability of observing the outcome $\mathbf{k} = (k_1, \dots, k_m)$ when the input configuration is $\mathbf{l}= (l_1, \dots, l_m)$, where $\sum_i k_i = \sum_i l_i = n$, is determined by \cite{aaronson2011computational,kuo2022boson}
\begin{equation}
    \mbox{Pr}(\mathbf{k}|\mathbf{l}) = \frac{\left| \text{Perm}(\Lambda[\mathbf{k}|\mathbf{l}]) \right|^2}{(\prod_i l_i!)( \prod_i k_i!)}.
\end{equation}
Here, the permanent of an $N \times N$ matrix \( A = (A_{i,j})_{i,j} \) defined similarly to the matrix determinant, but differs in that all signs in the summation remain positive
\begin{equation}
\text{Perm}(A) 
\equiv \sum_{\sigma \in S_N} \prod_{i=1}^{N} A_{i, \sigma(i)},
\end{equation}
where $S_N$ denotes the symmetric group on $N$ elements. Unlike the determinant, the permanent lacks alternating signs, making it computationally hard to evaluate. The matrix $\Lambda[\mathbf{k}|\textbf{l}]$ is an $N \times N$ matrix constructed by repeating the \( i \)-th column of \( U \) \( l_i \) times, and the \( j \)-th row \( k_i \) times. 

This protocol can be extended to generalized bosons, enabling applications such as spin-1/2 sampling~\cite{peropadre2017equivalence} and higher-spin sampling~\cite{kam2025beyond}, as demonstrated by Kuo et al.~\cite{kuo2022boson}. Such an extension broadens the scope of Boson sampling to more complex quantum systems. By showing that the Kerr oscillator behaves as a $q$-boson oscillator in the weak nonlinearity regime, defined by the deformed commutation relation $\hat{a}\hat{a}^\dagger-q \hat{a}^\dagger \hat{a} = 1$, the Fock-state sampling protocol of Aaronson and Arkhipov can be generalized to $q$-bosons. This approach can be experimentally realized using infinite-level Kerr nonlinear oscillators, with transmon qubits offering a tunable and practical hardware platform~\cite{kuo2022boson}.

Lastly, in the above, we utilize the fact that off-site modes commute, i.e., \( [a_i, a_j] = 0 \) for all \( i \neq j \). However, in real experimental settings, off-site crosstalk may occur. Such effects have been observed in practical NISQ devices, including those based on IBM architectures~\cite{PhysRevResearch.6.013142, arxiv.2402.06952} and ion trap platforms~\cite{PRXQuantum.2.040338}. Similarly, one could quantify this crosstalk using alternative algebraic frameworks; however, the resulting amplitudes would no longer correspond to permanents, but rather to what one might call $q$ permanent \cite{yang1991q,bapat1994inequalities,lal1998inequalities,da2018mu,da2010mu,de2018noncrossing,tagawa1993q,andelic2018mu}: $\text{Perm}_q(A) 
\equiv \sum_{\sigma \in S_N}q^{\ell(\sigma)} \prod_{i=1}^{N} A_{i, \sigma(i)},$ where $\ell(\sigma)$ is a multivariate combinatorial object that has received considerable attention within the contexts of pure algebra and representation theory. However, its computational complexity remains largely unexplored, and the difficulty of sampling from its associated distribution merits further investigation. Partial results have been obtained \cite{hung2023computational}, but a comprehensive complexity-theoretic characterization is still lacking.

\section{Conclusion and Discussion}
In this work, we have demonstrated that transmon qubits, traditionally viewed as limited to random circuit sampling, can be harnessed for Fock state $q$-boson sampling, a task that is classically intractable and capable of achieving quantum supremacy. By mapping the nonlinear Kerr oscillator Hamiltonian of transmon qubits to the $q$-boson formalism, we have shown that the weak anharmonicity inherent to transmons corresponds to a deformation parameter $q \approx 1 + \frac{K}{\omega}$, where $K$ is the Kerr nonlinearity strength and $\omega$ is the bare frequency. This mapping enables transmon qubits to operate as $q$-boson oscillators, facilitating the implementation of $q$-boson Fock state sampling. Our analysis leverages the full nonlinear spectrum of transmons, moving beyond the conventional two-level or seven-level approximations, to exploit their $q$-deformed properties for quantum computational advantage.

The significance of this result lies in its expansion of the computational scope of transmon-based quantum processors. Unlike standard Fock state boson sampling, which is infeasible with transmon qubits due to their superconducting architecture, $q$-boson sampling capitalizes on the intrinsic nonlinearity of transmons, making them viable candidates for sampling tasks that are computationally hard for classical systems. The $q$-boson framework introduces unique quantum interference effects through modified commutation relations, which we have shown to be naturally suited to the transmon’s energy spectrum. This finding aligns with theoretical results indicating that $q$-deformed systems exhibit computational hardness, reinforcing the potential of transmon qubits to outperform classical computers in specific tasks.

From a practical perspective, transmon qubits offer a robust and tunable platform for implementing $q$-boson sampling. Their well-established use in superconducting quantum processors, as demonstrated in experiments like Google’s Sycamore, ensures compatibility with existing quantum computing infrastructure. The typical anharmonicity of transmons ($|K|/\omega \approx 0.01$--$0.08$) aligns with the weak nonlinearity regime required for $q$-boson mapping, making them readily adaptable for experimental realizations. By extending the Aaronson-Arkhipov boson sampling protocol to $q$-bosons, our work provides a pathway for transmon-based systems to perform sampling tasks that could surpass classical computational capabilities, potentially achieving quantum supremacy in a new domain.

However, several challenges remain. Implementing $q$-boson sampling with transmons requires precise control over the nonlinear spectrum and high-fidelity multi-level operations, which may introduce experimental complexities beyond those encountered in two-level qubit operations. The infinite-level model used in our theoretical framework, while rigorous, assumes idealized conditions that may be affected by noise, decoherence, and higher-order nonlinearities in real-world devices. Additionally, the computational hardness of $q$-boson sampling, while supported by theoretical arguments, requires further validation through experimental demonstrations and comparisons with state-of-the-art classical simulation algorithms. Advances in classical computing, as seen in recent challenges to random circuit sampling claims, underscore the need for robust benchmarks to confirm quantum advantage in $q$-boson sampling tasks.
Future research directions include developing experimental protocols to implement $q$-boson sampling on existing transmon-based platforms, such as those developed by IBM, Google, or Rigetti \cite{kjaergaard2020superconducting, arute2019quantum, rigetti2017cloud}. Optimizing the control of higher energy levels and mitigating leakage to unwanted states will be critical for practical implementations \cite{krantz2019quantum, tripathi2022suppressing}. Furthermore, exploring other generalized boson frameworks, such as those beyond the Arik–Coon oscillator \cite{arik1976hilbert}, could uncover additional applications for transmon qubits in quantum simulation and machine learning. Theoretical studies should also focus on quantifying the computational complexity of $q$-boson sampling in greater detail, potentially by analyzing the scaling of matrix permanents in $q$-deformed systems \cite{brzezinski1993q} or developing hybrid protocols that combine $q$-boson sampling with other quantum advantage tasks \cite{schuld2015introduction}.

In conclusion, this work establishes transmon qubits as versatile tools for $q$-boson Fock state sampling, challenging the notion that they are limited to random circuit sampling. By leveraging their nonlinear properties within the $q$-boson framework, we open new avenues for demonstrating quantum supremacy and advancing quantum computation. As quantum hardware continues to mature, transmon-based $q$-boson sampling could play a pivotal role in realizing practical quantum advantages, bridging the gap between theoretical quantum complexity and experimental feasibility.


\begin{appendix}
\section{Different types of $q$-Bosons}\label{A}

In standard quantum mechanics, the canonical commutation relations, form the foundation of quantization. These relations give rise to the algebra of creation and annihilation operators, which define the behavior of the harmonic oscillator. However, for those physical systems exhibiting nonlinearity, interactions at the Planck scale, or underlying discrete or non-commutative geometries—cannot be adequately captured by this linear structure. As a result, generalizations of the Heisenberg algebra have been proposed, introducing deformed oscillator algebras. These often incorporate a deformation parameter (e.g., $q$ in $q$-deformations), altering the standard bosonic creation and annihilation relations while maintaining certain symmetry or representation-theoretic features.

The Arik–Coon oscillator algebra is one of the earliest and most studied examples of a $q$-deformed oscillator algebra \cite{arik1975operator,arik1976hilbert,burban2010arik}. In the Arik–Coon oscillator algebra, the operator $\hat{a}^\dagger \hat{a}$ plays a role analogous to that of the number operator in the standard harmonic oscillator, but with a deformation that modifies its spectrum. Specifically, the algebra is defined by the relation
\begin{equation}
\hat{a}\hat{a}^{\dagger}-q\hat{a}^{\dagger}\hat{a} = 1, 
\end{equation}
along with the commutation relations
\begin{equation}
[\hat{N},\hat{a}]=-\hat{a}, [\hat{N},\hat{a}^{\dagger}]=\hat{a}^{\dagger}.
\end{equation}
Here, $\hat{a}$ and $\hat{a}^\dagger$ are respectively the deformed annihilation and creation operators, $\hat{N}$ is the number operator, and $q$ is the deformation parameter. In this framework, one identifies the operator $\hat{a}^\dagger \hat{a}$
 with the $q$-number associated with $N$. That is, one defines
\begin{equation}
\hat{a}^\dagger \hat{a} = [\hat{N}]_q = \frac{1 - q^{\hat{N}}}{1 - q}.
\end{equation}
Here, the generators $\{I,a,a^\dagger,\hat{N}\}$ from a $q$-deformed Arik-Coon oscillator algebra, where the $q$-deformed counterpart of $\hat{a}^\dagger \hat{a}$ is interpreted as a key element in the spectrum generating algebra, as it directly labels the eigenstates in the Fock space and determines the physical spectrum of the oscillator. Even though the numerical values differ from the standard case due to the deformation, the underlying mechanism remains the same—the algebra, through its ladder structure, is responsible for generating the spectrum of eigenvalues. The $q$-deformed Heisenberg algebra is not a Lie algebra in the traditional sense. In a conventional Lie algebra—like the standard Heisenberg algebra—the commutation relations are linear in the generators and satisfy the Jacobi identity with constant structure coefficients. However, when we introduce the $q$-deformation, the commutation relations become nonlinear. In the Arik–Coon formulation, the deformation parameter 
$q$ alters the structure so that the bracket is no longer bilinear in the usual sense. This modified structure does not satisfy the standard axioms of a Lie algebra. Instead, $q$-deformed algebras are typically understood within the broader framework of quantum algebras or quantum groups \cite{arik1976hilbert}, often endowed with additional structures like co-products and antipodes, making them Hopf algebras \cite{kassel1995quantum}. These structures allow them to serve as deformations of the classical Lie algebras while accommodating phenomena like nonlinearity and discrete spectra that are not captured by traditional Lie algebra theory.

When acting on a Fock space eigenstate $|n\rangle$, which is an eigenstate of $\hat{N}$ with eigenvalue $n$, we obtain
\begin{equation}
    \hat{a}^\dagger \hat{a} \, |n\rangle = [n]_q \, |n\rangle = \frac{1 - q^n}{1 - q} \, |n\rangle.
\end{equation}
Thus, in the deformed algebra, the eigenvalues of \(\hat{a}^\dagger \hat{a}\) are not the ordinary nonnegative integers \(n\) but their \(q\)-deformed counterparts \(\,[n]_q\). In the limit \(q\to1\), the \(q\)-number \([n]_q\) reduces to \(n\), thereby recovering the conventional oscillator spectrum. This modification encapsulates the nonlinearity inherent in the deformed system and allows the Arik–Coon oscillator to serve as a prototype for nonlinear quantum systems, much as the standard harmonic oscillator does in the linear regime.

In contrast to the Arik–Coon definition, Biedenharn and Macfarlane independently introduced a modified $q$-deformed oscillator algebra as
\cite{Macfarlane1989,biedenharn1989quantum}
\begin{equation}
\hat{a}\hat{a}^{\dagger}-q\hat{a}^{\dagger}\hat{a} = q^{-\hat{N}},
\end{equation}
where the commutation relations are the same as those for Arik–Coon oscillators, but the product $\hat{a}^\dagger \hat{a}$ is replaced by 
\begin{equation}
\hat{a}^\dagger \hat{a}= [\hat{N}]_q = \frac{q^{\hat{N}} - q^{-\hat{N}}}{q - q^{-1}}.
\end{equation}
When acting on a Fock space eigenstate $|n\rangle$, which is an
eigenstate of $\hat{N}$ with eigenvalue $n$, we obtain
\begin{equation}
\hat{a}^\dagger \hat{a}|n\rangle= [n]_q |n\rangle= \frac{q^n - q^{-n}}{q - q^{-1}}|n\rangle.
\end{equation}
This form has the appealing property that it is symmetric under the transformation $q\leftrightarrow q^{-1}$. Like the Arik–Coon algebra, it also tends to the conventional oscillator algebra in the limit where $q\rightarrow 1$. In some mathematical treatments, the deformation can be seen as occurring in steps. Starting from the standard bosonic commutation relation, one may first introduce a minimal deformation that leads to the Arik–Coon algebra. A further or recursive deformation then yields the Biedenharn–Macfarlane algebra and its multiparameter generalizations. This view highlights that the two formalisms are related, with the Biedenharn–Macfarlane approach providing a more refined structure that incorporates additional symmetry. Both approaches reduce to the ordinary harmonic oscillator when $q\rightarrow 1$, but away from this limit, the different $q$-number definitions lead to different spectra and algebraic properties. These differences have practical implications; for example, in constructing coherent states or analyzing non-classical properties in quantum optics, one formulation or the other may be more convenient.

It is also straightforward to quantify the difference between the standard integer $n$ and its $q$-deformed counterpart $[n]_q = \frac{1 - q^n}{1 - q}$. We define the absolute and relative errors, respectively, as follows
\begin{subequations}
\begin{align}
    \Delta_q(n) &\equiv n - [n]_q = \frac{(1 - q)n - (1 - q^n)}{1 - q}, \\
    \delta_q(n) &\equiv \frac{n - [n]_q}{n} = \frac{\Delta_q(n)}{n}.
\end{align}
\end{subequations}
For values of $q$ in proximity to unity, specifically when $q = 1 - \epsilon$ with $\epsilon \ll 1$, one may invoke a Taylor expansion to derive the following approximations
\begin{subequations}
    \begin{align}
    [n]_q &\approx n - \frac{n^2}{2}(1 - q), \\
    \Delta_q(n) &\approx \frac{n^2}{2}(1 - q), \\
    \delta_q(n) &\approx \frac{n}{2}(1 - q).
\end{align}
\end{subequations}
These estimates clearly show that the absolute error $\Delta_q(n)$ exhibits quadratic growth with respect to $n$, whereas the relative error $\delta_q(n)$ grows linearly. As anticipated, both expressions tend to zero in the classical limit $q \to 1$.
Furthermore, given that $[n]_q$ is monotonically increasing and concave in $n$ for $0 < q < 1$, it follows that the deviation from linearity can be tightly bounded above
\begin{equation}
    0 < \Delta_q(n) < \frac{n(n-1)}{2}(1 - q).
\end{equation}
This, in turn, yields a corresponding upper bound on the relative error.”
\begin{equation}
    0 < \delta_q(n) < \frac{n - 1}{2}(1 - q).
\end{equation}
These bounds substantiate the intuitive expectation that the \( q \)-deformed integer \( [n]_q \) becomes a less accurate approximation of the classical integer \( n \) as either \( n \) increases or \( q \) diverges from unity. This trend is illustrated in Fig.~\ref{fig:q-deformed-full}, which presents a comparative analysis of \( n \), \( [n]_q \), and the corresponding relative error \( \delta_q(n) \) across a range of \( q \) values.
\begin{figure}[h]
    \centering
\includegraphics[width=1\linewidth]{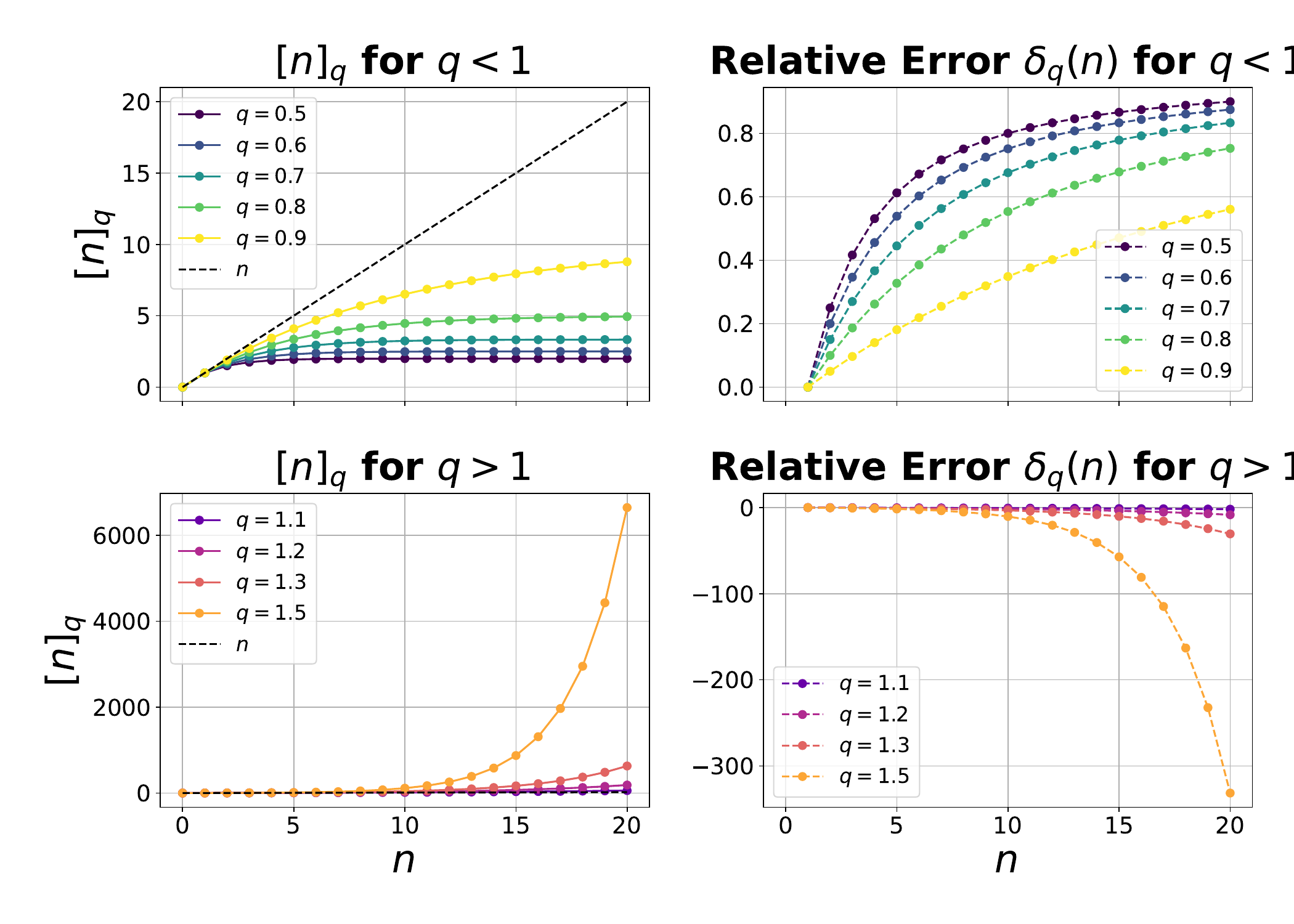}
    \caption{
        This four-panel visualization illustrates the behavior of \( q \)-deformed integers, defined as \( [n]_q = \frac{1 - q^n}{1 - q} \), along with their relative error \( \delta_q(n) = \frac{n - [n]_q}{n} \). The top row corresponds to \( q \in [0.5, 0.9] \), where \( [n]_q \) saturates, and the relative error decreases as \( q \to 1 \). In contrast, the bottom row shows the case for \( q > 1 \), where \( [n]_q \) grows superlinearly and the relative error becomes negative. In all regimes, the classical limit \( q \to 1 \) recovers the standard integer, i.e., \( [n]_q \to n \).
    }
    \label{fig:q-deformed-full}
\end{figure}
We conclude this section by re-examining the distinction between the symmetric \( q \)-boson formulation and the conventional definition. Let \( q = 1 + \delta \), where \( |\delta| \ll 1 \), and define the corresponding difference as follows
\begin{align}
    \Delta_{\mathrm{sym}}(n) \equiv [n]_q^{\mathrm{sym}} - [n]_q,
\end{align}
where $[n]_q^{\mathrm{sym}} = \frac{q^n - q^{-n}}{q - q^{-1}}$ and $[n]_q = \frac{1 - q^n}{1 - q}$. A Taylor expansion in the small parameter \( \delta \), where \( q = 1 + \delta \) with \( |\delta| \ll 1 \), yields
\begin{align}
    [n]_q^{\mathrm{sym}} \approx n + \frac{n^2}{2}\delta, \quad [n]_q \approx n + \frac{n(n-1)}{2}\delta,
\end{align}
so that the difference becomes
\begin{align}
    \Delta_{\mathrm{sym}}(n) \approx \frac{n}{2}\delta + \mathcal{O}(\delta^2).
\end{align}
This indicates that the symmetric \( q \)-number slightly overestimates the standard one for \( \delta > 0 \), even at \( n = 1 \). Consequently, the deviation is generally of order \( \mathcal{O}(\delta) \), and the agreement with \( [n]_q \) at low levels is not exact without additional modifications. Reference~\cite{burban2010arik} presents a unification of various deformed oscillator algebras through the introduction of the structure \( (q; \alpha, \beta, \gamma; \nu) \), defined by 
\begin{equation} \label{burban: struct}
 f(n) =f(0)q^{\gamma n}
+\begin{cases}
q^{\beta}\Bigl(\frac {q^{\gamma n} - q^{\alpha n}} {q^{\gamma} -
q^{\alpha}} + 2\nu \frac {q^{\gamma n} - (-1)^n q^{\alpha n}}
{q^{\gamma} + q^{\alpha}}\Bigr),&\text{if $\alpha\ne \gamma;$}\\
n q^{\gamma(n-1) + \beta} + 2 \nu q^{\gamma(n-1) + \beta}
\Bigl(\frac{1 - (-1)^n}{2}\Bigr),&\text{if $\alpha =
\gamma.$}\end{cases}
\end{equation}
The above defines a generalized bosonic algebra known as the \( (q; \alpha, \beta, \gamma; \nu) \)-deformed oscillator algebra. Below, we provide a general description of how this algebra can be related to the standard \( q \)-boson framework within the context of our sampling protocol.

In particular, we establish the following theorem, which outlines the necessary conditions under which this generalized algebra reproduces the standard \( q \)-integer \( [n]_q \) up to order \( \mathcal{O}(\delta^2) \) for \( n = 0, 1 \). Throughout this analysis, we restrict attention to the case \( \alpha \neq \gamma \).
\begin{theorem}
\label{thm:necessary_conditions}
Let the deformed sequence \( f(n) \) be defined as
\begin{align}
    f(n) := f(0) q^{\gamma n} + q^{\beta} \left( \frac{q^{\gamma n} - q^{\alpha n}}{q^\gamma - q^\alpha} + 2\nu \cdot \frac{q^{\gamma n} - (-1)^n q^{\alpha n}}{q^\gamma + q^\alpha} \right),
\end{align}
where \( \gamma, \alpha \in \mathbb{R} \) with \( \gamma \neq \alpha \), and \( \nu, \beta, f(0) \in \mathbb{R} \), under the expansion \( q = 1 + \delta \) with \( \delta \ll 1 \). Suppose the standard q-number is defined as
\begin{align}
    [n]_q := \frac{1 - q^n}{1 - q}.
\end{align}
If the gap defined by
\begin{align}
    \text{gap}(n) := f(n) - [n]_q
\end{align}
satisfies
\begin{align}
    \text{gap}(n) = \mathcal{O}(\delta^2) \quad \text{for } n = 0 \text{ and } n = 1,
\end{align}
then it is necessary that
\begin{align}
    \boxed{f(0) = 0, \quad \beta = 0, \quad \nu = 0, \quad \gamma + \alpha = 1}.
\end{align}
\end{theorem}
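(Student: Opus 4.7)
The plan is to substitute the two special cases $n=0$ and $n=1$ into the closed-form definition of $f(n)$, simplify using elementary $q$-algebra, and then match coefficients order by order in $\delta=q-1$ against the standard $q$-number $[n]_q$. The three scalar conditions $f(0)=0$, $\nu=0$, and $\beta=0$ fall out of this Taylor matching cleanly; the vector-like condition $\gamma+\alpha=1$ is structurally different, and I expect it to be the principal obstacle.

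I would first dispatch the $n=0$ case. Each of the two fractions inside the parenthesis in $f(0)$ carries a numerator of the form $q^{\gamma\cdot 0}-(\pm 1)q^{\alpha\cdot 0}=0$, so $f(0)$ collapses to $f(0)\cdot q^{0}=f(0)$. Combined with $[0]_q=0$ and the stated hypothesis $\text{gap}(0)=\mathcal{O}(\delta^{2})$, treating the scalar $f(0)$ as independent of $\delta$, this forces $f(0)=0$.

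Next, I would evaluate $f(1)$. Both fractions reduce to unity exactly, since $(q^{\gamma}-q^{\alpha})/(q^{\gamma}-q^{\alpha})=1$ and $(q^{\gamma}+q^{\alpha})/(q^{\gamma}+q^{\alpha})=1$, leaving $f(1)=q^{\beta}(1+2\nu)$, independently of $\gamma$ and $\alpha$. Expanding $q^{\beta}=1+\beta\delta+\binom{\beta}{2}\delta^{2}+\cdots$ and using $[1]_q=1$, one obtains
\begin{align*}
\text{gap}(1)=2\nu+(1+2\nu)\beta\,\delta+\mathcal{O}(\delta^{2}).
\end{align*}
The $\mathcal{O}(\delta^{2})$ hypothesis kills the $\delta^{0}$ coefficient first (yielding $\nu=0$) and then, with $\nu=0$ substituted in, the $\delta^{1}$ coefficient (yielding $\beta=0$).

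The final condition $\gamma+\alpha=1$ is where the obstacle arises. Once the three scalar conditions above are imposed, the gap vanishes identically in $\delta$ at both $n=0$ and $n=1$, so no additional information about $(\gamma,\alpha)$ is available from the hypothesis as literally stated. My route for recovering $\gamma+\alpha=1$ is to expand the reduced form $f(n)=(q^{\gamma n}-q^{\alpha n})/(q^{\gamma}-q^{\alpha})$ and $[n]_q=(q^{n}-1)/(q-1)$ to first order in $\delta$; factoring the common $(\gamma-\alpha)\delta$ out of numerator and denominator and applying the identity $\gamma(\gamma-1)-\alpha(\alpha-1)=(\gamma-\alpha)(\gamma+\alpha-1)$ leads, after dividing, to
\begin{align*}
\text{gap}(n)=\tfrac{n(n-1)}{2}\bigl[(\gamma+\alpha)-1\bigr]\,\delta+\mathcal{O}(\delta^{2}).
\end{align*}
The prefactor $n(n-1)$ vanishes precisely at $n=0,1$, so the stated hypothesis is satisfied automatically regardless of $\gamma+\alpha$. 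Closing the argument therefore requires that this $\mathcal{O}(\delta^{2})$ bound on the gap persist at some further value of $n$ as well—most economically $n=2$, where the coefficient collapses to $(\gamma+\alpha)-1$ and immediately forces $\gamma+\alpha=1$. The principal difficulty of the theorem as worded is exactly this: its fourth conclusion cannot be derived from the hypotheses at $n=0,1$ alone, and any honest proof must either appeal implicitly to a generic $n$ or strengthen the hypothesis accordingly.
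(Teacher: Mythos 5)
Your handling of the first three conditions coincides with the paper's own proof: $\mathrm{gap}(0)=f(0)$ forces $f(0)=0$; both fractions in $f(1)$ are identically $1$, so $f(1)=q^{\beta}(1+2\nu)$ exactly, and matching the $\delta^{0}$ and $\delta^{1}$ coefficients of $\mathrm{gap}(1)$ gives $\nu=0$ and then $\beta=0$. On these points the two arguments are the same.

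Where you diverge is on $\gamma+\alpha=1$, and your analysis is the correct one. The paper extracts this condition from the expansion $\mathrm{gap}(1)=\beta\delta+2\nu+\tfrac{\gamma+\alpha-1}{2}\,\delta+\mathcal{O}(\delta^{2})$, but that $\tfrac{\gamma+\alpha-1}{2}\,\delta$ term is spurious: at $n=1$ the numerator $q^{\gamma}-q^{\alpha}$ equals the denominator exactly, so the first fraction is $1$ to all orders and (once $f(0)=0$) $\mathrm{gap}(1)$ carries no dependence on $\gamma$ or $\alpha$ at all. The phantom term comes from expanding the numerator to $\mathcal{O}(\delta^{2})$ while truncating the denominator at its leading $\mathcal{O}(\delta)$ part, thereby missing an exact cancellation. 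Consequently, as you observe, the hypothesis at $n=0,1$ alone cannot force $\gamma+\alpha=1$: the choice $f(0)=\beta=\nu=0$ with arbitrary $\gamma+\alpha$ gives $\mathrm{gap}(0)=\mathrm{gap}(1)=0$ identically. Your general-$n$ computation, $\mathrm{gap}(n)=\tfrac{n(n-1)}{2}\bigl[(\gamma+\alpha)-1\bigr]\delta+\mathcal{O}(\delta^{2})$ after the first three conditions are imposed, is the right way to see this, and demanding the bound at $n=2$ (or any fixed $n\geq 2$, or all $n$) is the minimal strengthening of the hypothesis under which the boxed condition $\gamma+\alpha=1$ genuinely becomes necessary. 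So your proposal is sound precisely where the paper's Step 3 is not, and it correctly identifies how the theorem's statement should be amended.
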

\begin{proof}
We expand all relevant terms assuming \( q = 1 + \delta \), with \( \delta \ll 1 \).
\paragraph{Step 1: Expand the standard q-number.}
\begin{align}
    [n]_q = \frac{1 - (1 + \delta)^n}{-\delta} = n + \frac{n(n - 1)}{2} \delta + \mathcal{O}(\delta^2).
\end{align}
\paragraph{Step 2: Expand \( f(n) \).}
We expand the components
\begin{align}
    q^{\mu n} = (1 + \delta)^{\mu n} = 1 + \mu n \delta + \frac{\mu^2 n^2 - \mu n}{2} \delta^2 + \mathcal{O}(\delta^3),
\end{align}
which yields
\begin{subequations}
    \begin{align}
    q^{\gamma n} - q^{\alpha n} &= (\gamma - \alpha) n \delta + \frac{(\gamma^2 - \alpha^2)n^2 - (\gamma - \alpha)n}{2} \delta^2 + \mathcal{O}(\delta^3), \\
    q^{\gamma} - q^{\alpha} &= (\gamma - \alpha) \delta + \frac{(\gamma^2 - \alpha^2) - (\gamma - \alpha)}{2} \delta^2 + \mathcal{O}(\delta^3).
\end{align}
\end{subequations}
The non-smooth oscillating term
\begin{align}
    \frac{q^{\gamma n} - (-1)^n q^{\alpha n}}{q^\gamma + q^\alpha}
\end{align}
contributes a non-vanishing constant at \( n = 1 \):
\begin{align}
    \text{osc}(1) = 2\nu \cdot \frac{q^\gamma + q^\alpha}{q^\gamma + q^\alpha} = 2\nu.
\end{align}
\paragraph{Step 3: Impose conditions on gap.}
At \( n = 0 \), we require:
\begin{align}
    \text{gap}(0) = f(0) + \mathcal{O}(\delta) = 0 \Rightarrow f(0) = 0.
\end{align}
At \( n = 1 \), we substitute:
\[
f(1) = q^\beta \left( \frac{q^\gamma - q^\alpha}{q^\gamma - q^\alpha} + 2\nu \cdot \frac{q^\gamma - (-1) q^\alpha}{q^\gamma + q^\alpha} \right) = q^\beta (1 + 2\nu).
\]
Using \( f(0) = 0 \), and expanding \( q^\beta = 1 + \beta \delta + \mathcal{O}(\delta^2) \), we find:
\[
\text{gap}(1) = \beta \delta + 2\nu + \left( \frac{\gamma + \alpha - 1}{2} \right) \delta + \mathcal{O}(\delta^2).
\]
To ensure \( \text{gap}(1) = \mathcal{O}(\delta^2) \), the \( \delta^0 \) and \( \delta^1 \) coefficients must vanish $\nu = 0$, $\beta = 0$, and $\gamma + \alpha = 1$. \end{proof}
Notice that for $n=0, 1$, the energy gap satisfies \( \leq \mathcal{O}(\delta^2) \), which is compatible with the requirements of our boson sampling protocol~\cite{kuo2022boson,aaronson2011computational}. Accordingly, we demonstrate that the \( (q; \alpha, \beta, \gamma; \nu) \) oscillator algebra is suitable when the conditions \( \alpha + \gamma = 1 \) and \( f(0) = \beta = \nu = 0 \) are imposed, yielding an exact match with the standard distribution to first order.

\section{Derivation of Transmon Energy Levels}\label{B}

To derive the approximate energy levels Eq.\:\eqref{Transmon} for the transmon Hamiltonian $ H = 4 E_C (n - n_g)^2 - E_J \cos \phi $, we analyze the system in the transmon regime, where $ E_J \gg E_C $. The transmon is a weakly anharmonic oscillator, and the energy levels are obtained by treating the Hamiltonian quantum mechanically, approximating the potential, and applying perturbation theory.

To begin, the transmon Hamiltonian is given by
\begin{equation}
H = 4 E_C (n - n_g)^2 - E_J \cos \phi,
\end{equation}
where $n$ is the number operator for Cooper pairs, representing the number of Cooper pairs on the superconducting island. $\phi$ is the phase operator, representing the superconducting phase difference across the Josephson junction. $E_C$ is the charging energy, with $e$ being the electron charge and $ C_\Sigma$ the total capacitance. $E_J$ is the Josephson energy, characterizing the strength of Cooper pair tunneling. $n_g$ is the effective offset charge induced by the gate voltage $V_g$. $ n $ and $\phi$ are conjugate operators, satisfying the commutation relation $[\phi, n] = i$.

In the transmon regime ($ E_J \gg E_C $), the large Josephson energy flattens the charge dispersion, making the energy levels less sensitive to fluctuations in $n_g$. For simplicity, we set $n_g = 0$ (or assume $ n_g $ is tuned to a sweet spot, e.g., $n_g = 0.5$), as the transmon’s energy levels are nearly independent of $n_g$. Thus, the Hamiltonian becomes
\begin{equation}
H = 4 E_C n^2 - E_J \cos \phi.
\end{equation}
The Hamiltonian resembles that of a quantum mechanical particle in a periodic potential. The term $4 E_C n^2$ is analogous to the kinetic energy, where $ n $ is related to the momentum operator, and the term $-E_J \cos \phi$ acts as a periodic potential, with $\phi$ analogous to the position coordinate.

In the transmon regime, $E_J\gg E_C$, so the $\cos \phi$ term dominates, and the phase $\phi$ is confined to small fluctuations around the minimum of the cosine potential (e.g., $\phi \approx 0$). We approximate the potential by expanding the cosine term for small $\phi$
\begin{equation}
- E_J \cos \phi \approx  -E_J + \frac{E_J \phi^2}{2} - \frac{E_J \phi^4}{24} + \cdots.
\end{equation}
Dropping the constant term $ -E_J $ which shifts all energy levels uniformly and can be redefined as the zero of energy, the Hamiltonian becomes
\begin{equation}\label{Phi4}
H \approx 4 E_C n^2 + \frac{E_J \phi^2}{2} - \frac{E_J \phi^4}{24}.
\end{equation}
The first two terms in Eq.\:\eqref{Phi4} resemble the Hamiltonian of a harmonic oscillator, while the last term in Eq.\:\eqref{Phi4} introduces anharmonicity, which is small in the transmon regime.

Let us first consider only the harmonic part
\begin{equation}
H_0 = 4 E_C n^2 + \frac{E_J \phi^2}{2}.
\end{equation}
To express this in a standard form, note that $ n $ and $ \phi $ are conjugate variables. In the phase basis, $ \phi $ is a coordinate, and $ n = -i \frac{d}{d\phi} $ (in units where $ \hbar = 1 $). The Hamiltonian resembles:
\begin{equation}
H_0 = -4 E_C \frac{d^2}{d\phi^2} + \frac{E_J \phi^2}{2}.
\end{equation}
This is the Hamiltonian of a harmonic oscillator with effective mass $m = \frac{1}{8 E_C}$, and spring constant $k = E_J$. The angular frequency of the oscillator is thus
\begin{equation}
\omega = \sqrt{\frac{k}{m}} = \sqrt{8 E_J E_C}.
\end{equation}
The energy levels of a quantum harmonic oscillator are
\begin{equation}
E_m^{(0)} = \omega \left( m + \frac{1}{2} \right) = \sqrt{8 E_J E_C} \left( m + \frac{1}{2} \right),
\end{equation}
where $m\in \mathbf{N}$ is the quantum number. This gives the leading term in the energy expression.

To continue, the $ -\frac{E_J \phi^4}{24} $ term in Eq.\:\eqref{Phi4} introduces anharmonicity and is treated as a perturbation
\begin{equation}
H' = -\frac{E_J \phi^4}{24}.
\end{equation}
In the harmonic oscillator basis, we compute the first-order correction to the energy levels using perturbation theory. The correction is
\begin{equation}
E_m^{(1)} = \langle m | H' | m \rangle = -\frac{E_J}{24} \langle m | \phi^4 | m \rangle,
\end{equation}
where $|m\rangle$ are the harmonic oscillator eigenstates with frequency $\omega = \sqrt{8 E_J E_C}$. To compute $\langle m | \phi^4 | m \rangle$, we express $\phi$ in terms of ladder operators $a$ and $a^\dagger$
\begin{equation}
\phi  = \left(\frac{2 E_C}{E_J}\right)^{1/4} (a + a^\dagger).
\end{equation}
Hence, one needs to compute the expectation $ \langle m | (a + a^\dagger)^4 | m \rangle $. The non-zero terms contributing to $ \langle m | (a + a^\dagger)^4 | m \rangle $ are those terms that contain exactly two annihilation operators and two creation operators. Using the commutation relation $ [a, a^\dagger] = 1 $, the relevant terms are
\begin{subequations}
\begin{align}
\langle m | (a^\dagger a)^2 | m \rangle &= m^2, \\
\langle m | a^\dagger a^2 a^\dagger | m \rangle &= \langle a a^{\dagger 2}a \rangle = m^2 + m, \\
\langle m| a a^\dagger a a^\dagger|m\rangle &= m^2+2m+1,\\
\langle m| a^{\dagger 2}a^2|m\rangle &= m^2-m,\\
\langle m| a^2a^{\dagger 2}|m\rangle &= m^2+3m+2.
\end{align}
\end{subequations}
Summing the above contributions, we find
\begin{equation}
\langle m | (a + a^\dagger)^4 | m \rangle = 6 m^2 + 6 m + 3.
\end{equation}
Thus the perturbation correction is 
\begin{equation}
E_m^{(1)} \approx -\frac{E_C}{2} \left( m^2 + m + \frac{1}{2} \right).
\end{equation}

Combining the harmonic and anharmonic terms, the energy spectrum for transmon qubits is given by \cite{koch2007charge}
\begin{equation}
E_m \approx \sqrt{8 E_J E_C} \left( m + \frac{1}{2} \right) - \frac{E_C}{12} \left( m^2 + m + \frac{1}{2} \right),
\end{equation}
where the leading term represents the harmonic oscillator energies, while the anharmonic correction arises from the \( \phi^4 \) term, reducing the energy and introducing anharmonicity. The resulting anharmonicity is given by
\begin{equation}
E_2 - E_1 - (E_1 - E_0) \approx -E_C.
\end{equation}
This negative shift shows that the energy spacing between successive levels decreases by approximately \( E_C \) as \( m \) increases, a defining feature of the transmon qubit. We notice that higher-order moments of the form \( \langle m| a^{i} a^{\dagger j} |m\rangle \) can be systematically computed using the combinatorial and normal-ordering techniques developed in~\cite{blasiak2007combinatorics,konig2021newton,kim2022normal}.

In this derivation, we assume \( n_g = 0 \) for simplicity. Since the transmon is insensitive to \( n_g \), this is a reasonable approximation. The exact form of the anharmonic term may vary depending on the approximation method, but the given expression is widely accepted in transmon literature. This derivation assumes the transmon regime ($E_J / E_C \sim$ 50-100), where the harmonic approximation and first-order perturbation suffice.

\end{appendix}

\bibliographystyle{unsrt}
\bibliography{sample}

\onecolumngrid

\end{document}